\documentclass[letterpaper,aps,pra,nolongbibliography,twocolumn,showpacs,floatfix,superscriptaddress,10pt]{revtex4-2} \usepackage{url}
\usepackage{amsmath}
\usepackage{amssymb}
\usepackage[utf8]{inputenc}
\usepackage[english]{babel}
\usepackage[T1]{fontenc}
\usepackage{amsmath}
\usepackage{hyperref}
\usepackage{lipsum}
\usepackage{graphicx,helvet}
\usepackage{color}

\usepackage{mathtools}

\usepackage{soul}
\usepackage{amsfonts}
 \usepackage{amsmath}
\usepackage{lipsum}\definecolor{mygray}{gray}{0.4}
\definecolor{light-blue}{rgb}{0.8,0.85,1}
\graphicspath{{images/}}
\usepackage{amsthm}
\usepackage[draft,inline,nomargin]{fixme} \fxsetup{theme=color} \FXRegisterAuthor{cv}{acv}{\color{magenta}CV}
 \FXRegisterAuthor{cp}{acp}{\color{blue}CP}
 \FXRegisterAuthor{dd}{ddd}{\color{red} DD}
 \FXRegisterAuthor{ar}{aar}{\color{green} AR}
\renewcommand{\>}{\rangle}
\mathchardef\Re="023C
\mathchardef\Im="023D
\usepackage[mathlines]{lineno}  
\setlength\linenumbersep{3pt}\newcommand{\<}{\langle}

\newcommand{\mcB}{\mathcal{B}}

\newcommand{\mcE}{\mathcal{E}}

\newcommand{\mcA}{\mathcal{A}}

\newtheorem{theorem}{Theorem}

\newcommand{\mcF}{\mathcal{F}}

\newcommand{\mcS}{\mathcal{S}}
\newcommand{\ie}{i.e.}
\newcommand{\fmlong}{fuzzy measurements}
\newcommand{\Fmlong}{Fuzzy measurements}

\newcommand{\rmi}{\mathrm{i}}

\newcommand{\mcC}{\mathcal{C}}
\newcommand{\cg}{\mcC}

\newcommand{\eref}[1]{eq.~(\ref{#1})} 

\newcommand{\fref}[1]{fig.~\ref{#1}}

\newcommand{\tr}{\mathop{\mathrm{Tr}}\nolimits}

\newcommand{\ket}[1]{{\vert #1 \rangle}}

\newcommand{\proj}[2]{{\vert #1 \rangle \langle #2 \vert}}

\newcommand{\unam}{Universidad Nacional Aut\'onoma de M\'exico, Ciudad de M\'exico 01000, M\'exico}
\newcommand{\ifunam}{Instituto de F\'{\i}sica, \unam}

\begin{document}
\title{\Fmlong{} and coarse graining in quantum many-body systems}
\author{Carlos Pineda} \email{carlospgmat03@gmail.com} \affiliation{\ifunam}
\author{David Davalos}
\affiliation{RCQI, Institute of Physics, Slovak Academy of Sciences, D\'ubravsk\'a cesta 9, 84511 Bratislava, Slovakia}
 \affiliation{\ifunam}
\author{Carlos Viviescas} \affiliation{Departamento de F\'{\i}sica, Universidad Nacional de Colombia, Carrera 30 No.~45-03, Bogotá D.C., Colombia}
\author{Antonio Rosado}\affiliation{\ifunam}
\begin{abstract} Using the quantum map formalism, we provide a framework to construct fuzzy and
coarse grained quantum maps of many-body systems that account for limitations
in the resolution of real measurement devices probing them. The first set of
maps handles particle-indexing errors, while the second deals with the effects
of detectors that can only resolve a fraction of the system constituents. We
apply these maps to a spin-$1/2$ $XX$-chain obtaining a blurred picture of the
entanglement generation and propagation in the system. By construction, both
maps are simply  related via a partial trace, which allows us to concentrate on
the properties of the former. We fully characterize the fuzzy map, identifying
its symmetries and invariant spaces. We show that the volume of the
tomographically accessible states decreases at a double-exponential rate in the
number of particles, imposing severe bounds on the ability to read and use
information of a many-body quantum system.

\end{abstract} 

\maketitle

\section{Introduction} Ultimately, like with all systems in nature, limitations in measurements set
the boundaries of what we can learn about the quantum world. Not surprisingly,
the measurement problem has played a prominent role in quantum mechanics since its
foundation~\cite{Zurek:99827}. Nowadays, quantum many-body systems can be
probed and manipulated at the single-particle level~\cite{Ott:2016dc}, allowing
for the study with unprecedented detail of problems ranging from quantum
entanglement to thermalization~\cite{RevModPhys.91.021001}. Such dazzling
advances in recent years have been possible due to an unparalleled
development of measurement techniques~\cite{Ott:2016dc, Bloch:2018kb} as well
as of our understanding of the measurement process in quantum
mechanics~\cite{Wise09,Jacobs:2014ui}. Yet, with the advent of quantum
technologies, technical and conceptual challenges remain.

Prominently, the scalability required for quantum technology places great
demands on the measurement and control of quantum systems with an increasing
number of degrees of freedom. At present, the necessary resources for the
experimental manipulation and characterization of many-body quantum systems at
the single particle level swiftly mount up with the size of the system
\cite{Haffner:2005hs, Preskill:2018gt}, making such detailed description
unfeasible. A technical obstacle that evinces the pressing necessity of an
accurate effective characterization of these systems when probed with the use
of imperfect measuring devices. In this work, invoking the language of quantum
channels, we construct a framework to study this emerging portrayal of many-body
systems.  

Imperfect measurements were considered by Peres, when assessing the
consequences of having an experimental uncertainty in the eigenvalue
measurement larger than the difference between consecutive
eigenvalues~\cite{peres1995quantum}. In this context, projective measurements
were generalized to positive operator valued measurements, allowing for a more
gentle effect on the measured system~\cite{PhysRevD.20.384} and incorporating
unsharp, weak, and fuzzy measurement
schemes~\cite{GUDDER200518,carmeli2008,Busch2010}.
More recently, as an alternative to the decoherence program, in
\cite{Brukner2007} and \cite{Brukner2008} it has been argued that classical features
emerge from coarse grained descriptions of quantum systems when measured with
imperfect apparatuses, sparking research on \emph{ad hoc} models to estimate
the resilience of quantum features under imperfect measurements
\cite{Duarte2017, PhysRevA.100.022334, Carvalho:2020fo}, arguably, an
additional issue to account for in the development of quantum
technologies~\cite{Preskill:2018gt}.

This progress notwithstanding, a solid framework that can systematically
address the implications of studying quantum many-body systems using imperfect
measurements is still lacking. This work aims at closing this gap. Our
formalism considers imperfect detection of two different, but deeply related,
kinds in quantum many-body systems: \emph{fuzzy measurements} (FMs), in which
single particles can be resolved, however, there is always a finite probability
of their misidentification; and \emph{coarse graining} (CG) measurements, in which
groups of particles are treated collectively as an effective particle, either
because the measurement device is not sensible enough to resolve all degrees of
freedom in the system, or because only a few of these degrees of freedom are of
relevance for the system property being studied.  We formalize both concepts in
the language of quantum maps, identifying their symmetries, spectra, and
invariant spaces. Physical consequences of our results are illustrated with
spin-entanglement waves, in which a blurring of the observable entanglement is
observed, and a study of the contractive properties of our maps, showing a
double-exponential contraction rate of the accessible state volume with the
number of particles, a remarkable result hinting at the fragility of quantum
resources with respect to imperfect measurements.

 \section{\Fmlong{}} Consider the situation in which a single-particle measurement is performed on a many-body system, but one is not sure on
which particle this measurement was applied. For example, one shines an ion
chain, and obtains a fluorescent signal. However, due to the addressing
imperfections of the detector device, one is not able to determine the exact
origin of the fluorescent signal. The obtained information in this case becomes
blurred, yet its quantification is still possible. \par

 \par
Consider first the simplest many-body system: two particles. Suppose a measurement of the observable $A\otimes B$ in a two particle system
is wanted. Yet, with probability $1-p$, the measurement apparatus mistakes the
particles, so instead, sometimes, a measurement of $B\otimes A$ is done. Hence,
if $\rho$ is the state of the system, the outcome of this FM is
\begin{equation}
p\tr \rho A\otimes B + (1-p) \tr \rho B\otimes A
= \< A\otimes B\>_{\mcF_{2\text{p}}[\rho]},
\label{eq:ejemplo:fuzzy:dos}
\end{equation}
with $\mcF_{2\text{p}}[\rho]=p \rho +(1-p)
S_{01}[\rho]$, where the brackets in front of a unitary operator denote its
natural action on density matrices, e.g. $S_{ij}[\rho] = S_{ij} \rho
S_{ij}^{\dagger}$ is the application of the swap gate with respect to particles
$i$ and $j$ to the system state. Incidentally, note that if the error is
assumed to be the same for every observable of the system, then
$\mcF_{2\text{p}}[\rho]$ corresponds to the tomographically accessible state
\footnote{Consider $\left\{A_i \right\}_i$ to be a set of 
tomographically complete observables, suffering from the fuzzy noise. Then the measurable
averages are {$\tr((p A_{i} + (1-p) S A_{i} S) \rho) =  \langle A_i
\rangle_{\mcF_{2\text{p}}[\rho]}$}  for all $A_i$. Since {$\langle A_i
\rangle_{\mcF_{2\text{p}}[\rho]}$} are the components of
$\mcF_{2\text{p}}[\rho]$ in the complete operator basis $\left\{A_i \right\}$,
then $\mcF_{2\text{p}}[\rho]$ is the state accessible tomographically.}.

The simple reasoning above can be followed for an $n$-body system. If
the measurement device wrongly identifies particles with probability $p_{P}
\ge 0$, according to permutation $P$, the outcome of the FM of
operator $M$ is $\tr \left(M \mcF[\rho]\right)$, with
\begin{equation}
\mcF[\rho] = \sum_{P\in \mcS} p_{P} P[\rho], 
\label{eq:fuzzy_general}
\end{equation}
where $\mcS$ is a subset of the symmetric group of $n$ particles, and $\sum_{P \in
\mcS} p_{P} = 1$. We regard Eq.~\eqref{eq:fuzzy_general} as the most general
form of the FM channel and consider some particular examples below.

Quite generally, during a measurement, two-body errors are more likely to occur
than higher order ones. Accounting just for these, the effective
tomographically accessible state is
\begin{equation}
\mcF_{2\text{b}}[\rho] = p \rho + (1-p)\sum_{i,j} p_{ij} S_{ij}[\rho],
\label{eq:fuzzy_two_body}
\end{equation}
with $\mcS$ containing only swaps.  A relevant related case is a periodic
one-dimensional chain in which errors between adjacent particles are the only
ones taken into account. In this case  \begin{equation}
\mcF_{1\text{d}}[\rho] = 
    p \rho + (1-p)\sum_i p_i 
S_{i,i+1}[\rho]
\label{eq:fuzzy_chain}
\end{equation}
with $\mcS$ consisting of neighboring exchanges.

Similar considerations can be done for two-dimensional systems or more refined
proposals, e.g. a swap probability that decays with the distance between $i$ and
$j$, or a particular experimental setup.
\par
\section{Coarse graining}  Consider now a slightly worse situation in which the measurement device is
unable to resolve the fine details of the whole system and is forced to capture
effective
reduced states of randomly chosen subsets of the system. Two processes
characterize such an apparatus: the choice of random subsets
and the reduction to a representative state as a result of a
partial trace of the subset. The latter accounts for the discarded information of the
inaccessible parts and provides a \emph{coarse-grained} picture of the system
state. This is illustrated in the simple case of a two-particle system.
The expected value of the single-particle observable $M$ measured by an
apparatus that with probability $p$ ($1-p$) detects the first (second) particle
is  $\tr \left(M \rho_\text{eff}\right)$, with the effective single-particle
state $\rho_\text{eff}=\tr_1 \mcF_{2\text{p}}[\rho]$ [cf.
\eref{eq:ejemplo:fuzzy:dos}]. 
\par

The same procedure can be applied to systems with more than two particles.
Assume that a detector is able to measure only $N$ particles at a time, picking each
one of them randomly from subsets of $m_{k}$ particles, with $k=1,\dots, N$. The effective $N$-particle state is given by 
\begin{equation}
\rho_\text{eff} := 
\left(
\bigotimes_{k=1}^N \cg_{m_k}
\right)[\rho]
\end{equation}
with
\begin{equation}
\cg_{m}[\rho] = \frac{1}{m} \tr_{\overline{1}} \sum_{i=1}^m S_{1,i}[\rho],
\label{eq:bonche}
\end{equation}
where the overline in the subindex of the trace denotes the set complement, in this case trace over all but the first particle.
As we now show, this channel is the composition of a
FM and a partial trace: a
consequence of the linearity of the trace and the group properties of
permutations. Let us start by factorizing the partial traces in a tensor product,
as their domains are disjoint. Noting that $\mcF^{(m)}[\rho] := 1/m
\sum_{i=1}^m S_{1,i}[\rho]$ is a fuzzy map, we can write
\begin{equation}
\label{eq:kappaF}
\begin{split}
\bigotimes_{k=1}^N \cg_{m_k} &=\tr^{(1)}_{\overline{1}}\otimes \dots \otimes \tr^{(N)}_{\overline{1}} \left( \mcF^{(m_1)}\otimes \dots \otimes \mcF^{(m_N)} \right) \\
&= \tr_{\overline{\left\{ 1_1, \dots, 1_N \right\}} } \left( \mcF^{(m_1)}\otimes \dots \otimes \mcF^{(m_N)} \right).
\end{split}
\end{equation}
Now, by distributing the tensor product, we can define the FM
\begin{align*}
\mcF &:=  \mcF^{(m_1)}\otimes \dots \otimes \mcF^{(m_N)}  \\
&= \left(\frac{1}{m_1}\sum_{i_1=1}^{m_1} S_{1,i_1}\right) \otimes \dots \otimes \left(\frac{1}{m_N}\sum_{i_N=1}^{m_N} S_{1,i_N} \right)\\
&= \prod_{j=1}^N \frac{1}{m_j} \sum_{i_1,\dots,i_N=1}^{m_1,\dots,m_N} S_{1,i_1}\otimes \dots \otimes S_{1,i_N},
\end{align*}
since tensor products of swaps result in disjoint permutations with respect to
particle sets. We can now rewrite the effective $N$-particle state as the
result of applying the CG map to $\rho$,
\begin{equation}
\rho_\text{eff} := \tr_\kappa \mcF[\rho],
\end{equation}
where $\kappa$ is a suitable choice of subsystem to
trace, as given in \eqref{eq:kappaF}. The case $m_{k}=m=3$  is depicted in
\fref{fig:cg_chain}(a).

\par Motivated by a different physical feasible situation, we examine a second
CG measurement with its corresponding FM. Consider a
chain of particles such that the conditions with respect to the measurement
device alternate, e.g. the even ones are closer to the detector than the odd
ones; see~\fref{fig:cg_chain}(b) for an example. Assume
that with probability $p$ the measurement device measures the even particles,
but with a position dependent probability one of its odd-labeled neighbors is
detected. 
The channel associated with such a measurement is
\begin{equation}
   p\tr_\text{odd}\rho 
   + (1-p) \tr_\text{odd} \sum_{i} p_i S_{i,i+1}[\rho ] 
=\tr_\text{odd}\mcF_{1\text{d}}[\rho].
\nonumber
 \end{equation}
Single particle observables of, say, particle $2i$ are evaluated with respect
 to the effective
 density matrix  
 $[1 +p(p_{2i-1}+p_{2i})] \rho_{2i} + pp_{2i-1}\rho_{2i-1} + p p_{2i} \rho_{2i+1}$, where $\rho_i$ is the reduced density matrix for particle $i$. 
 Two particle observables, say of particles $2i$ and $2j$, are calculated with a density 
 matrix, with coefficients of order $0$ in $1-p$ of  $\rho_{2i,2j}$, and other
 contributions of order $1$, of $\rho_{2i\pm1,2j}$ and  $\rho_{2i,2j\pm1}$.
\par \begin{figure} \centering
\includegraphics[width=\columnwidth]{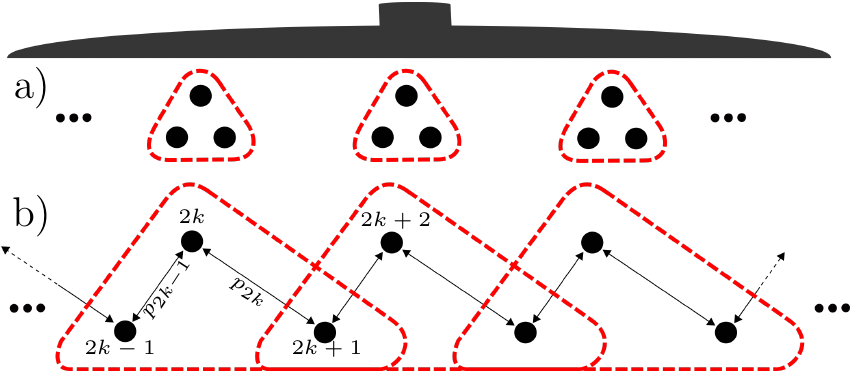}
\caption{CG schemes. (a)
The detector groups sets of $m=3$ particles into one; see \eref{eq:bonche}. 
(b)
The detector measures even particles with a higher probability than odd ones, but
sometimes it
mistakes them for one of their neighbors.
\label{fig:cg_chain}}
\end{figure} The above examples illustrate how to construct reduced quantum states using quantum
maps motivated by probabilistically chosen subsystems, based on the 
physical arrangement of both the system and the measurement device;
this construction might include the simultaneous exchange of a greater number
of particles. 
An encompassing scheme for CG maps is
\begin{align}
\cg[\rho]
&=
\tr_\tau 
\sum_{P \in \mcS} p_{P} 
P \rho P^\dagger{} =\tr_\tau \mcF[\rho],
\label{eq:fuzzy_cg_general_relation}
\end{align}
where $\tau$ denotes the part of the system that is traced.

 \section{Fuzzy and coarse grained entanglement waves}To illustrate the use of the proposed maps, we now consider the recently achieved observation of single spin impurity
dynamics \cite{Fukuhara:2013hq} and spin-entanglement wave propagation
\cite{Fukuhara:2015ec} in one-dimensional Bose-Hubbard chains at the level of
single-atom-resolution detection \cite{Sherson:2010hg,Bakr:2010gd}.\par The dynamics of a single spin impurity in a one dimensional homogeneous
spin-$1/2$ $XX$-chain is generated by the  Hamiltonian $\hat{H} = -
J_{\text{ex}} \sum_{j} (\hat{\sigma}^{+}_{j} \hat{\sigma}^{-}_{j+1} +
\hat{\sigma}^{-}_{j} \hat{\sigma}^{+}_{j+1}) $, where $J_{\text{ex}}$ is the
exchange coupling and $\hat{\sigma}^{\pm}_{j}$ are spin-$1/2$ raising (lowering)
operators acting on particle $j$~\cite{Subrahmanyam:2004eo,Amico:2004ck}.  
We write an infinite spin-chain state with a single spin-up
impurity at site $j$ as $\ket{j} \equiv \ket{\dots, 0_{j-1}, 1_{j}, 0_{j+1},
\dots}$, where $\ket{1}$ ($\ket{0}$) refers to spin a spin-up (spin-down) state in the
$z$ basis, and choose the initial state as $\ket{0}$, \ie{}, with the spin-up impurity
at the center of the chain. At later times the excitation travels coherently to both sides of the chain as described by the state of the system
$\ket{\psi (t)} = \sum_{j} \phi_{j}(t) \ket{j}$, where $\phi_{j}(t) = \rmi^{j}
J_{j} (t J_{\text{ex}}/ \hbar)$ with $J_{j} (x)$ the Bessel function of the
first kind~\cite{Konno:2005da}.
In the first column in \fref{fig:chess}, the generation of concurrence and
its spread in a wave-like fashion along the
spin chain is shown~\cite{Amico:2004ck,Subrahmanyam:2004eo,Mazza:2015km}. 
It is noticeable how the maximum of the concurrence propagates to neighboring sites, moving away from the center.
\par We now study the entanglement in the system under FM. 
Assume, that with probability $1-p$, the measurement apparatus
is displaced equiprobably one site to the right or to the left of the chain. This case
is described by \eref{eq:fuzzy_general} with $\mcS=\{\openone, T, T^\dagger\}$,
and $T|j\>=|j+1\>$, $p_{\openone} =p$, $p_T=p_{T^\dagger} = (1-p)/2$. As shown
in the second column of \fref{fig:chess}, the entanglement still exhibits its
wavelike spreading albeit with a weaker intensity compared to the unaltered dynamics.
This decoherence effect of the FM is
responsible for the blurry appearance of the images, which however is not
homogeneous, and is most prominent among distant symmetric pairs.
Observe, for instance, that the squarelike structures that can be seen 
for the exact dynamics at $t=6$ are almost completely lost when the FM is applied.
\par We also consider the effects of CG. For this, we group the
particles in disjoint  sets, as in \eref{eq:bonche}. To respect the symmetry $j\to -j$ of
the system, we group particles in sets of two ($\pm \{1,2\},\pm
\{3,4\},\ldots$), with $\mcC_2$, and sets of four  ($\pm \{1,2,3,4\},\pm
\{5,6,7,8\},\ldots$), with $\mcC_4$,  leaving particle $j=0$ unaltered. The
entanglement evolution for these two cases is shown in columns $3$ and $4$ in
\fref{fig:chess} for
$\mcC_2$ and $\mcC_4$, respectively. In both cases, despite the lower
resolution, the entanglement wavelike propagation can still be seen, with a
higher intensity for $\mcC_2$ than for $\mcC_4$. Interestingly, entanglement
seems to be suppressed more for neighboring particles than for distant
CG particles. This can be appreciated in the last two columns in
\fref{fig:chess}, where the pattern is concentrated along the diagonal.
\par \begin{figure} \begin{center}
\includegraphics[width=\columnwidth]{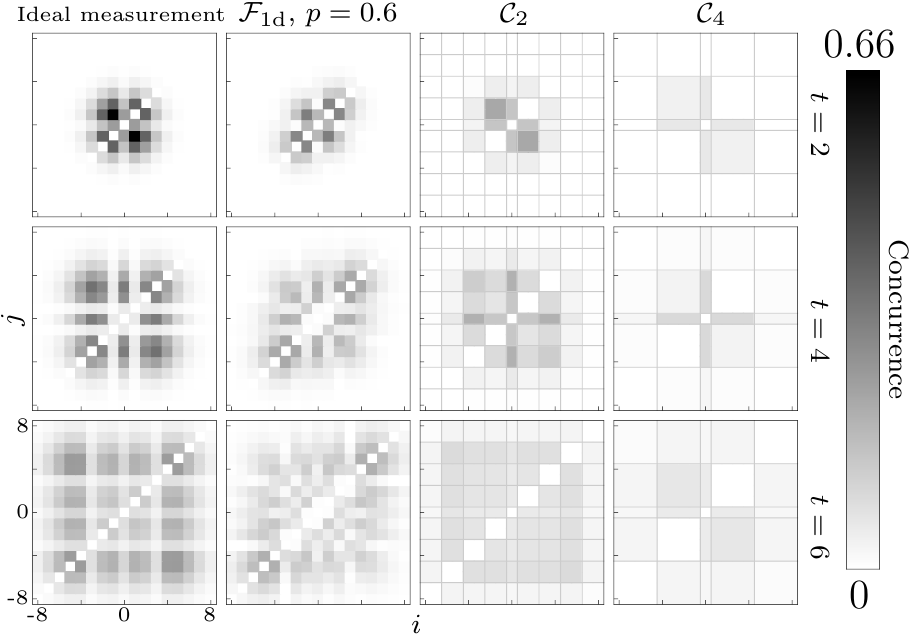}
\caption{
Concurrence for the single spin impurity
dynamics as a function of lattice sites $i$ and $j$ at different times. 
The first column shows the entanglement for the unaltered states. The second 
column 
corresponds to an FM, whereas the third and fourth columns display results
for coarse-grained descriptions,  grouping two and four particles, respectively
(gridlines group the coarse-grained sets of particles). The central particle
has not been coarse grained in order to maintain the symmetry.}
\label{fig:chess}
\end{center}
\end{figure}

 \section{Symmetries and spectrum}
\subsection{Symmetries of \fmlong{}} 
Understanding the symmetries of a physical system is crucial to understanding 
its dynamics. For closed quantum systems symmetries result in quantum numbers, 
which are used to understand and classify the system's spectrum.
Here, we calculate all symmetries for generic FMs, \ie{}, with $p_P> 0 $ for every permutation $P$ in the symmetric group,
and, as an important example, connect them with the generic invariant space. \par 
Consider a system of $n$ particles with a single-particle space of dimension $d$.
We start by introducing the linear superoperator
$\Gamma_{l,l'}[\varrho]=\gamma_{l,l'}\varrho$, which counts how many particles
in $\varrho$ are in $|l\> \< l'|$.
For example, for $\varrho = |00\>\<01|=  |0\>\<0|\otimes  |0\>\<1|$, 
$\Gamma_{0,0}[\varrho] = \Gamma_{0,1}[\varrho] =\varrho $ while 
$\Gamma_{1,0}[\varrho] = \Gamma_{1,1}[\varrho] =0 $.
Clearly, operators $\Gamma_{l,l'}$ commute among themselves. In addition, for
any particle permutation, $\Gamma_{l,l'} P[\varrho] = P \Gamma_{l,l'} [\rho]$,
and therefore they also commute with the FM, $[\mcF,\Gamma_{l,l'}] = 0$. Hence
$\mcF$ can be diagonalized in blocks $\mcF_\gamma$, each indexed by a
matrix $\gamma$ with $d\times d$ integers ranging from 0 to $n$. The subspace of physical states is the one labeled by all diagonal $\gamma$s. The number of blocks can be counted using the stars and bars theorem~\cite{starsbars} and is given by the binomial coefficient $\binom{d^2+n-1}{n}$.
Since the total number of particles is fixed, $\sum_{l,l'} \gamma_{l,l'} =n$ in
all blocks. \par

Note that there are some equivalent blocks. Working in the computational
basis, it can be shown by explicit substitution that if
$\Gamma_{l,l'}[\varrho]=\gamma_{l,l'}\varrho$, then 
$\Gamma_{l',l}[\varrho^\text{T}]=\gamma_{l',l} \varrho^\text{T}$, and that 
\begin{equation}
(\varrho_1,
\mcF[\varrho_2]) = (\varrho_1^\text{T} , \mcF[\varrho_2^\text{T}]),
\label{eq:identification_transpose}
\end{equation}
 for arbitrary $\varrho_1$ and $\varrho_2$. Both identities combined lead to
$\mcF_\gamma= \mcF_{\gamma^T}$ when an appropriate order of the computational
basis is used.  
Further block equivalences follow from relabeling symmetries of the levels that
allow block identification. Let $Q$ be any of the $d!$ unitary operators that
relabels the elements of the computational basis; for qubits this set
is $\{\openone, \sigma_x^{\otimes n} \}$. Since $[Q,P]=0$, $\forall P\in
\mcS$, it follows that 
\begin{equation}
(\rho_1, \mcF[\rho_2] ) = (Q\rho_1 Q', \mcF[Q \rho_2
Q']),
\label{eq:identification_Q}
\end{equation}
 for arbitrary $\varrho_1$ and $\varrho_2$, establishing the equivalence
between block $\mcF_{M(Q)\gamma
M(Q')^\text{T}}$ and block $\mcF_{\gamma}$, with $M(Q)$ the permutation matrix
of $d$ elements associated with $Q$.

Examining the qubit case provides some intuition in the interpretation 
of matrix $\gamma$. For qubits, $\gamma$ is a $2 \times 2$ matrix whose integer, 
semipositive entries should add up to $n$. This leaves three free parameters, which 
we organize as follows: $\alpha = \gamma_{1,0} + \gamma_{1,1}$ and 
$\beta = \gamma_{0,1} + \gamma_{1,1}$ which count the number of excitations
in the ket and in the bra, respectively, and $\gamma_{1,1}$. 
Thanks to the identification of blocks, we can always choose $\alpha,\beta,
\gamma \le n/2$, and the degeneracy $\delta$ depends on the repeated values
of the entries of $\gamma$. Thus, we can write
\begin{linenomath*}
\begin{equation*}
\mcF^\text{qubits} = 
\bigoplus_{\alpha,\beta,\gamma_{1,1} =0}^{ \lfloor n/2 \rfloor } 
     \mcF_{(\alpha,\beta,\gamma_{1,1})}^{\oplus \delta(\gamma)}.
\label{}
\end{equation*}
\end{linenomath*}
For the generic case the blocks $\mcF_\gamma$
are irreducible. To prove it, note that if all elements of the permutation group in $\mcF$
have a positive weight, all matrix elements of $\mcF_\gamma$ in the
computational basis of the corresponding subspace are strictly positive. This
is a consequence of the fact that
for any two such basis elements $\varrho_{1,2}$, characterized by the same
$\gamma$, there exists a permutation $P$ such that $P[\varrho_1] = \varrho_2$;
see Appendix~\ref{app:connectiongamma}.  Therefore the blocks $\mcF_\gamma$ are
irreducible \cite{Meyer2001}. Moreover, according to the Perron-Frobenius theorem~\cite{perron}, each
one contains only one invariant matrix. This implies that for diagonal
$\gamma$, the blocks $\mcF_\gamma$ define \textit{ergodic} quantum
channels~\cite{ergodic}.
\par
\subsection{Invariant space of \fmlong{}} Due to the Perron-Frobenius theorem and the total of irreducible blocks in the
generic case, it follows immediately that the dimension of the invariant space
is $C(d^2+n-1, n)$. Furthermore, using
Theorem $1$ in Appendix~\ref{app:connectiongamma}, it is easy to prove that for every
$\gamma$ the homogeneously weighted combination of all matrix elements of
$\mcF_\gamma$, in the computational basis, is an invariant matrix.  Thus such
matrices are the only invariant states per block (up to a scalar). In fact, such
matrices are invariant under any permutation according to the same theorem,
thus forming the symmetric set of matrices. In summary,
for the generic case and $\Delta$ a linear operator,
\begin{equation}
\mcF[\Delta]=\Delta \Leftrightarrow P[\Delta]=\Delta, \ \ \forall P\in \mcS.
\label{eq:invariancce_generic}
\end{equation}
Restricting $\Delta$ to be a positive definite operator, using Bellman
inequalities, a similar result which encompasses subgroups of the symmetric group, can be obtained for non-generic FMs such 
as $\mcF_{2\text{b}}$ and $\mcF_{1\text{d}}$; see
Appendix~\ref{app:invariant}. 
In conclusion, for FMs
whose permutations generate the symmetric group of $n$ particles, we identify
symmetric states as the only invariant ones. This set contains the
pure symmetric states which are simply eigenstates of the total angular
momentum [each particle being a spin-$(d-1)/2$] with the maximum eigenvalue.
As shown below, the set of non-symmetrical states is dramatically
affected by the action of the FM.
\par
\subsection{Volume contraction} \begin{figure} \centering
\includegraphics[]{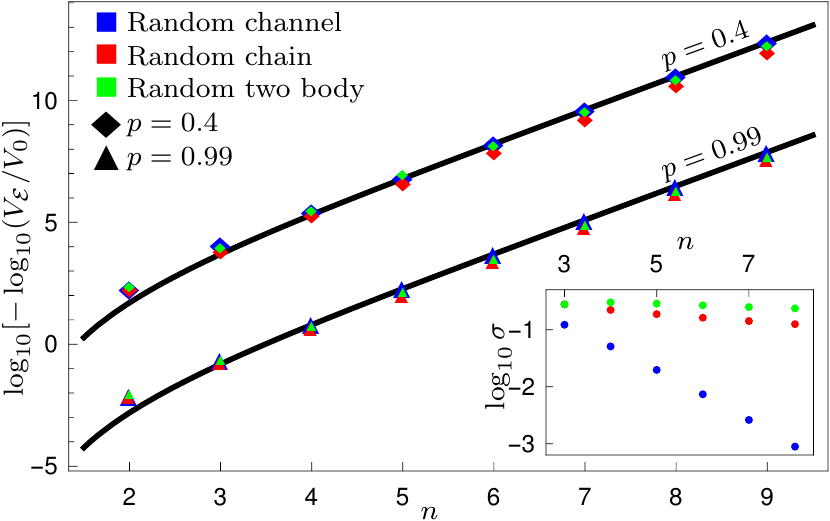}
\caption{
Volume contraction of the space of density matrices under a single
realization of random, two-body, and chain FMs, together 
with the ansatz,~\eref{eq:azats:contraction}. 
Double-exponential contraction of the state space is shown.
Inset: The standard deviation of the corresponding spectra (ignoring the
ones). Self-averaging is observed only for the
random general case, \eref{eq:random}.
\label{fig:sigma:gamma}}
\end{figure} In order to calculate the volume change due to the application of channel $\mcE$, we consider the manifold in which density matrices
live. All density matrices can be expressed as
$\openone/\tr\openone + \sum_{i=1}^D \alpha_l G_l$, where the 
$G_l $ form a traceless complete orthonormal set of matrices
and $D=n^d-1$. 
Treating $\alpha$ as a real vector in $\mathbb{R}^{D}$, we can consider
the volume (with the usual measure) in such $D=n^d-1$ space. For the qubit case
this volume corresponds to the volume in the Bloch sphere representation. 
Consider now the hypercube defined by the points 
$\rho$ and $\rho + \epsilon G_l$, with $l=1,\dots,D$. The volume is
$V_0 = \epsilon^D$. Under the map, the hypercube will transform 
to the $D$-parallelotope defined by the corners
$\mcE(\rho)$ and $\mcE(\rho + \epsilon G_l)$ whose signed volume is 
$V_\mcE = V_0 \det \mcE = V_0 \prod_l \lambda_l$, where $\lambda_l$ are the
eigenvalues of $\mcE$. 

We calculate the volume contraction for several of the channels presented in
this paper. 
First, consider the case of the random CG channel defined in 
\eref{eq:random}. Recall that 
the spectrum has an eigenvalue of 1 degenerated $C(d^2+n-1,n)$ times. 
We assume that the other eigenvalues are $p$, based  on self-averaging and
the spectral gap observed. This leads
to a volume contraction
given by 
\begin{equation}
\frac{V_\mcE}{V_0} 
\approx 
p^{\left[d^{2n} -{\textstyle \begin{psmallmatrix}d^2+n-1\\n \end{psmallmatrix}}\right]}. 
\label{eq:azats:contraction}
\end{equation}
Note that this is a double exponential in the number of particles. 
In \fref{fig:sigma:gamma} we show a comparison between this approximation and a single 
realization of the channel for a varying number of qubits. 

This implies that exploring the space of states, even with very efficient
detectors, is extremely difficult. This is coherent with the difficulties  found
in quantum tomography~\cite{Banaszek_2013,Flammia_2012}.  Fortunately, the
fraction of the Hilbert space in which nature lives, seems to be much
smaller~\cite{Orus2019,2009arXiv0905.0669P}. \par

Due to Uhlmann's theorem~\cite{UHLMANN1976273}, fuzzy states are majorized by
exact ones, \ie{} $\mcF[\rho]\prec \rho$. This means that any convex function
of the density matrices, such as the
von Neumann entropy~\cite{geometry2017}, a wide class
of entanglement measures~\cite{Horodecki} and other quantum
correlations~\cite{KUMAR20163044}, 
will
decrease under FM. 
\par
\begin{figure} \centering
\includegraphics[]{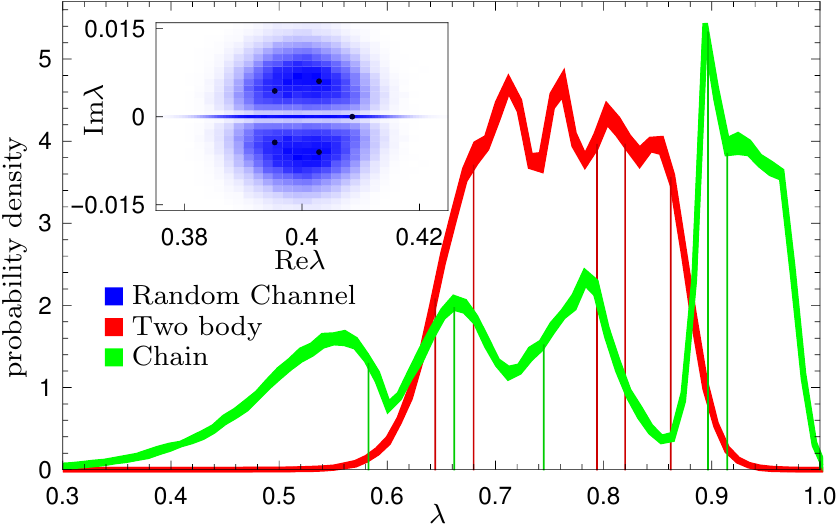}
\caption{
Probability density of the spectrum (aside from the 1s') for random FM,
two-body, 
and one-dimensional chains.
The calculation is done for six qubits, $\gamma=\text{diag}\{5,1\}$, and
$10^5$ realizations.
For 2-body, and 1-d chains we divided the data into 10 groups to obtain a
standard deviation, which is plotted as the thickness of the curves. An example
of a spectrum 
for a single realization is included as darker dots (inset) or lines (figure). 
\label{fig:distribution}}
\end{figure} \subsection{Spectrum} 
We now examine some features of the spectrum of FM. 
In the spectrum we observe exactly $C(d^2+n-1, n)$ ones, one for each block, confirming
what we obtained using the Perron-Frobenius theorem. 
We also note that changing the probability of error, while maintaining the 
relative probabilities of other permutations fixed, just rescales the 
spectrum since
$\text{Spec}(p \openone + (1-p) \mcF )=p + (1-p)\text{Spec}(\mcF )$.
It must also be noted that for general channels, the spectrum is complex. However, for channels involving only swaps
[e.g. \eref{eq:fuzzy_two_body} and \eref{eq:fuzzy_chain}] the spectrum 
is real as the operator is symmetric. 

To conclude, we analyze the spectrum for generic situations (see
also~\cite{BRUZDA2009320} and \cite{prosenrandomlindblad}). To do so, we define
random measuring devices characterized by the channels
\begin{equation}
\mcF_{\text{ran}}[\rho] = 
    p \rho + (1-p)\sum_{P\in \mcS} p_P 
P[\rho],
\label{eq:random}
\end{equation}
where $p$ describes the probability of
doing the correct measurement and the $p_P$ values are chosen uniformly and
normalized such that $\sum_{P\in \mcS} p_P=1$. Random two-body and
one-dimensional FMs are similarly defined, via \eref{eq:fuzzy_two_body} and
\eref{eq:fuzzy_chain} respectively.  For a random measuring device,
$\mcF_{\text{ran}}$, aside from the 1s eigenvalues discussed above, we observe
self-averaging of the eigenvalues around $1-p$, see \fref{fig:distribution}.
This implies a spectral gap of approximately $p$. For random 2-body and
1-dimensional FM there is a richer structure and, if existent,
selfaveraging is much slower, see \fref{fig:distribution}.

 \section{Conclusions}
In this work we have addressed the theoretical characterization of quantum systems
from imperfect measurements and provided a solid framework in the language of
quantum channels to build fuzzy and coarse grained descriptions of them.  The
former not only is a critical step for the latter, but furnishes a quantum
channel scheme to address particle-indexing experimental errors. Fuzzy maps do
not reduce the size of the system, yet they diminish non-symmetric correlations
with respect to the permutation group generated by the elements defining them.
Coarse graining maps, on the other hand, combine contributions from every
particle in the system into a reduced state, effectively lowering the number of
particles and giving rise to coarser particles. In this paper, besides their
general definitions, we explicitly construct maps corresponding to typical
situations encountered in present state-of-the-art experiments in many-body
physics, e.g., two-body errors and low resolution apparatuses. 

In order to illustrate the use of the proposed maps we apply them to describe
entanglement waves in a spin-$1/2$ $XX$-chain. As expected, in the portrayal
obtained, the quantum correlations in the system are partially concealed,
reflecting a decoherence-like effect due to the loss of information intrinsic
to FMs and CG measurements. Such an outcome may be of relevance for some of the
results reported in \cite{Sherson:2010hg}.

Though it is well known that in practice we cannot reconstruct the whole state
of a many-body quantum system, our framework offers tools to compute the
accessible state set in a wide number of scenarios. In the last part of the
paper we have carefully studied the symmetries and spectra of the FM and CG
channels, and identified the set of invariant states as the completely symmetric
states.  Remarkably, considering nonsymmetric states, we have shown that the
volume of the accessible states contracts at a doubly exponential rate when
fuzzy-like noise is present, signaling the fragility of quantum correlation
under imperfect detection.

\par
\section{Acknowledgments}
Conversations with F. de Melo, T. H. Seligman, J. D. Urbina,
and A. Diaz-Ruelas are also acknowledged. 
Support by projects CONACyT 285754 and UNAM-PAPIIT IG100518 and IG101421 
is also acknowledged. 
\appendix
\section{Connection between ket-bras in $\mcB_\gamma$} \label{app:connectiongamma}
In this section we provide the connection between elements of the computational
basis lying in the same invariant sector of $\mcF$. This result is needed to
prove that sectors $\mcB_\gamma$ (set of bounded operators spanned by the
elements of the computational basis with matrix eigenvalue $\gamma$; see text
for further details) are irreducible in the generic case.
\begin{theorem}[Connection of elements in $\mcB_\gamma$]
Let $\proj{\vec x}{\vec y}, \ \ \proj{\vec x'}{\vec y'} \in \mcB_\gamma$ where
$\ket{\vec x}$, $\ket{\vec y}$, $\ket{\vec x'}$ and $\ket{\vec y'}$ are
elements of the computational basis, and $\mcB_\gamma$ is the support of
$\mcF_\gamma$; then  $\exists P\in \mcS$ such that
\begin{equation}
\proj{\vec x'}{\vec y'}=P\proj{\vec x}{\vec y}P^\dagger.
\end{equation}
\end{theorem}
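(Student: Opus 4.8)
The plan is to recognize that the matrix $\gamma$ is nothing but the histogram of the multiset of single-particle pairs $(x_i,y_i)$ composing a computational-basis ket-bra, and that the particle-permutation group acts on such elements precisely by rearranging this multiset. Concretely, I would first write $\proj{\vec x}{\vec y}=\bigotimes_{i=1}^n \proj{x_i}{y_i}$ and record the action of the permutation unitary $P\in\mcS$ associated with a permutation $\sigma$ of $\{1,\dots,n\}$, namely $P\proj{\vec x}{\vec y}P^\dagger=\bigotimes_{i=1}^n \proj{x_{\sigma^{-1}(i)}}{y_{\sigma^{-1}(i)}}$. The crucial point I would emphasize is that, because the \emph{same} $\sigma$ is applied to the ket and to the bra, the pair $(x_i,y_i)$ is transported as a unit: conjugation by $P$ never separates a ket index from its partner bra index, and the exact convention ($\sigma$ versus $\sigma^{-1}$) is immaterial.

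Next I would unpack the definition of $\Gamma_{l,l'}$ to identify the relevant invariant. Since $\Gamma_{l,l'}$ counts the particles in state $\proj{l}{l'}$, its eigenvalue on $\proj{\vec x}{\vec y}$ is $\gamma_{l,l'}=\#\{\,i:(x_i,y_i)=(l,l')\,\}$, i.e.\ the multiplicity of the symbol $(l,l')$ in the length-$n$ sequence $\big((x_1,y_1),\dots,(x_n,y_n)\big)$ over the alphabet $\{0,\dots,d-1\}^2$. Thus the hypothesis $\proj{\vec x}{\vec y},\proj{\vec x'}{\vec y'}\in\mcB_\gamma$ is exactly the statement that the two pair-sequences have identical histograms.

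Finally I would produce the permutation explicitly. Two finite sequences over a common alphabet that share the same histogram are rearrangements of one another: for each symbol $(l,l')$ there are equally many, $\gamma_{l,l'}$, occurrences in each sequence, so one may match the occurrences of $(l,l')$ in $\big((x_i,y_i)\big)$ bijectively with those in $\big((x'_i,y'_i)\big)$, and collecting these matchings over all symbols yields a bijection $\sigma$ of $\{1,\dots,n\}$. Taking $P\in\mcS$ to be the associated permutation unitary then gives $P\proj{\vec x}{\vec y}P^\dagger=\proj{\vec x'}{\vec y'}$ by the action computed in the first step.

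The argument is essentially elementary combinatorics, so I do not expect a deep obstacle; the only thing requiring genuine care is the bookkeeping of the group action. The potential pitfall I would guard against is the conjugation convention: one must verify that $P(\cdot)P^\dagger$ permutes positions while keeping each ket-bra pair $(x_i,y_i)$ intact, rather than acting asymmetrically on kets and bras. This is precisely what makes $\gamma$ a \textbf{complete} invariant of the permutation orbit, and hence what licenses the converse direction needed to conclude irreducibility of $\mcF_\gamma$.
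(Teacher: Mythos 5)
Your proof is correct and is essentially the paper's own argument: both rest on the observation that conjugation by a permutation unitary transports each single-particle ket-bra $\proj{x_i}{y_i}$ as a unit, so that $\gamma$ is precisely the histogram of the pair sequence and a complete invariant of the orbit. The only cosmetic difference is that the paper produces the permutation by sorting both elements to a canonical reference operator $\proj{\vec x_\text{r}^\gamma}{\vec y_\text{r}^\gamma}$ and composing, $P''=P(P')^\dagger$, whereas you match occurrences of each symbol $(l,l')$ directly --- the same bijection constructed in one step instead of two.
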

\begin{proof}
Let us show that every $\proj{\vec x}{\vec y}\in \mcB_\gamma$
can be written as a permutation of a reference ket-bra, \ie{}, $\proj{\vec
x}{\vec y}=P[\proj{\vec x_\text{r}^\gamma}{\vec y_\text{r}^\gamma}]$, where the subscript
``r'' stands for
\textit{reference}. 
Let us now rewrite 
\begin{equation}
\proj{\vec x}{\vec y} = 
\proj{x_1}{y_1} \otimes \dots
\otimes 
\proj{x_N}{y_N}.
\end{equation}
By definition, there will be $\gamma_{i,j}$ instances of the single particle
operator $|i\>\<j|$. One can then order the particles such that the $\gamma_{0,0}$
operators $|0\>\<0|$ are firsts, the $\gamma_{0,1}$
operators $|0\>\<1|$ are next, and so on, until the $\gamma_{d-1,d-1}$
operators $|d-1\>\<d-1|$.
This is the reference operator. In other words, 
\begin{multline}
\proj{\vec x_\text{r}^\gamma}{\vec y_\text{r}^\gamma} = 
|0\>\<0|^{\otimes \gamma_{0,0}} \otimes 
|0\>\<1|^{\otimes \gamma_{0,1}}\otimes 
\\
\dots \otimes |d-1\>\<d-1|^{\otimes \gamma_{d-1,d-1}}.
\end{multline}
Such a permutation will be called $P$. Let $\proj{\vec x'}{\vec y'}\in
\mcB_\gamma$ be another element of the computational 
basis characterized by the same $\gamma$, and let $P'$ be the corresponding permutation. 
Then, since 
\begin{align}
\proj{\vec x}{\vec y}   &= 
    P\proj{\vec x_\text{r}^\gamma}{\vec y_\text{r}^\gamma} P^\dagger \\
\proj{\vec x'}{\vec y'} &= 
    P'\proj{\vec x_\text{r}^\gamma}{\vec y_\text{r}^\gamma} P'^\dagger,
\end{align}
we have that 
\begin{align}
\proj{\vec x}{\vec y} &= P (P')^\dagger \proj{\vec x'}{\vec y'} P' P^\dagger \\
&= P''  \proj{\vec x'}{\vec y'} (P'')^\dagger,
\end{align}
with $P''=P (P')^\dagger$. 
This completes the proof.
\end{proof}
\section{Invariant states} \label{app:invariant}
In the text, using the symmetries of the generic fuzzy map and the
Perron-Frobenius theorem, we have proven that a matrix $\Delta$ is invariant
under generic FM if and only if it is invariant under any permutation. Here we
give a theorem that holds only for Hermitian positive-definite operators, but
that holds for nongeneric maps.

\begin{theorem}[Invariant Hermitian matrices]
Let $\Delta$ be a bounded positive-definite operator, and $\mcF$ an FM 
defined by a probability vector $\vec p$  whose non-zero entries multiply terms
corresponding to a set of permutations $\mcA$. Then,
\begin{equation}
P[\Delta]=\Delta \ \ \forall P\in \text{G}_\mcA,
\end{equation}
where $\text{G}_\mcA$ is the permutation group generated by $\mcA$.
\end{theorem}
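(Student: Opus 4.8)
The plan is to make explicit the invariance hypothesis that the statement clearly intends, namely $\mcF[\Delta]=\Delta$, and then to exploit the one scalar functional that is simultaneously preserved by $\mcF$ and invariant under each permutation conjugation: the determinant. Since every $P\in\mcA$ is unitary, $\det\left(P[\Delta]\right)=\det\Delta$ for all $P$, and since $\Delta$ is positive-definite we have $\det\Delta>0$, so every logarithm appearing below is finite. This finiteness is what the positive-definiteness is ultimately buying us.

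The central tool is Bellman's inequality, i.e.\ the multiplicative form of the strict concavity of $\log\det$ on the positive-definite cone: for positive-definite operators $A_P$ and weights $p_P>0$ with $\sum_{P\in\mcA}p_P=1$,
\begin{equation}
\det\Big(\sum_{P\in\mcA} p_P A_P\Big)\ \ge\ \prod_{P\in\mcA}\big(\det A_P\big)^{p_P},
\nonumber
\end{equation}
with equality if and only if all the $A_P$ coincide. First I would apply this with $A_P=P[\Delta]$. The left-hand side is then $\det\mcF[\Delta]=\det\Delta$ by the invariance hypothesis, while the right-hand side equals $\prod_{P\in\mcA}(\det\Delta)^{p_P}=\det\Delta$ by unitary invariance of the determinant together with $\sum_{P\in\mcA}p_P=1$. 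Hence both sides agree and equality holds in Bellman's inequality.

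Next I would invoke the equality clause: all $P[\Delta]$ with $P\in\mcA$ must equal a common operator $M$. Substituting back gives $\mcF[\Delta]=\sum_{P\in\mcA} p_P M=M$, and since $\mcF[\Delta]=\Delta$ we conclude $M=\Delta$, that is, $P[\Delta]=\Delta$ for every generator $P\in\mcA$. Finally I would upgrade this from the generators to the generated group. The set of permutations fixing $\Delta$ under conjugation is closed under products and inverses: if $P_1[\Delta]=P_2[\Delta]=\Delta$ then $(P_1P_2)[\Delta]=P_1\big[P_2[\Delta]\big]=\Delta$, and the relation $P[\Delta]=\Delta$ rearranges to $P^\dagger\Delta P=\Delta$, so the inverse also fixes $\Delta$. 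This set is therefore a subgroup containing $\mcA$, hence it contains all of $\text{G}_\mcA$, which is precisely the claim.

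The main obstacle is the equality analysis, since the entire conclusion hinges on the \emph{strictness} of Bellman's inequality, which in turn requires $\det\Delta>0$. This is exactly where positive-definiteness, as opposed to mere positive-semidefiniteness, is indispensable: for a singular $\Delta$ the determinants vanish, the equality condition degenerates, and one can no longer pin each $P[\Delta]$ down to $\Delta$. This is what distinguishes the present route from the generic Perron--Frobenius argument of the main text and explains why the theorem trades generality in the state (requiring it to be positive-definite) for generality in the map (covering non-generic FMs whose support $\mcA$ need not generate the full symmetric group).
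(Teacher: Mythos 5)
Your proof is correct, and it takes a genuinely different route from the paper's. The paper's own argument works with purity rather than the determinant: from the (implicit) hypothesis $\mcF[\Delta]=\Delta$ it writes $\tr[\Delta^2]=\tr\left[\mcF[\Delta]^2\right]$, expands the square into a convex combination $\sum_{P\in\mcA'}q_P\tr[P\Delta P^\dagger \Delta]$ over the set $\mcA'$ of pairwise concatenations of elements of $\mcA$, bounds each term by the Bellman trace inequality $\tr[P\Delta P^\dagger\Delta]\le\tr[\Delta^2]$, and converts the forced term-by-term equality into $P\Delta P^\dagger=\Delta$ via the Hilbert--Schmidt identity $\|\Delta-P\Delta P^\dagger\|_2^2=2\left(\tr[\Delta^2]-\tr[P\Delta P^\dagger\Delta]\right)=0$, before passing from $\mcA'$ to $\text{G}_\mcA$. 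Your determinant argument --- $\det\mcF[\Delta]=\det\Delta$ plus strict concavity of $\log\det$ on the positive-definite cone --- compresses the trace inequality and the norm computation into a single equality analysis, and it acts directly on the generators $P\in\mcA$ rather than on $\mcA'$. That is a small but real advantage: the paper's claims $\mcA\subseteq\mcA'$ and $\text{G}_{\mcA'}=\text{G}_\mcA$ tacitly assume the identity belongs to $\mcA$ (true for all the FMs constructed in the paper, where the identity carries weight $p$, but not among the theorem's stated hypotheses; for a singleton $\mcA=\{P\}$ one gets $\mcA'=\{e\}$), whereas your stabilizer-subgroup closing step needs no such assumption. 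Conversely, the paper's route is more robust in the state: its key inequality holds (via Cauchy--Schwarz in the Hilbert--Schmidt inner product) for merely Hermitian $\Delta$, while your equality analysis genuinely requires $\det\Delta>0$, as you correctly flag, so your proof uses positive-definiteness essentially where the paper's does not; both arguments are intrinsically finite-dimensional, consistent with the paper's setting. One terminological caution: what you call Bellman's inequality is the multiplicative (log-det concavity, Minkowski--Ky Fan type) statement, which is not the same inequality the paper cites under that name, though both live in the same family of positive-definite matrix inequalities. You were also right to make explicit the missing hypothesis $\mcF[\Delta]=\Delta$, without which the statement as printed is vacuous.
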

\begin{proof}
Note that if $\mcF[\Delta]=\Delta$, then
$\tr[\Delta^2]=\tr\left[\mcF[\Delta]^2\right]$ holds. For density matrices this
equality simply means that the purity is preserved. Developing the expression
we get
\begin{align}
\tr[\Delta^2]&=\tr\left[\mcF[\Delta]^2\right]\nonumber\\
&=\tr\left[\left(\sum_{ P \in \mcA}p_{ P} P \Delta P^{\dagger}\right)^2\right]\nonumber\\
&=\sum_{ P \in \mcA'} q_{ P} \tr \left[  P \Delta  P^\dagger{} \Delta \right],
\label{eq:convex_purity_sum}
\end{align}
where the $q_P$s are quadratic functions of the $p_P$s and $\mcA'$ is a set of
permutations generated by pairwise concatenations of elements in $\mcA$. Note
that the $q_P$s form a probability distribution. This follows from the fact
that each $q_P$ is a sum of elements of a product distribution (see the second
inequality). Thus, the sum in~\eref{eq:convex_purity_sum} is a convex
combination of traces.
Observe that $\mcA \subseteq \mcA'$, where the equality holds only if $\mcA$ is a permutation group.
Using the Bellman inequalities~\cite{Bellman1980,Zhou2014} we have $\tr[
P\Delta P^{\dagger}\Delta]\leq \tr[\Delta^2]$, hence the convex combination
in~\eref{eq:convex_purity_sum} equals $\tr \Delta^2$ only if $\tr
[\Delta^2]=\tr[ P \Delta P^\dagger{} \Delta]$. 

Consider now the trace norm,
\begin{align*}
||\Delta-P {\Delta}P^{\dagger}||_2^2
&=\tr\left[\left|\Delta-P\Delta P^\dagger{}\right|^2\right]\\
&=\tr\left[\left(\Delta-P\Delta P^\dagger{}\right)^2\right]\\
&=\tr\left[\Delta^2-\Delta P \Delta P^\dagger{}-P \Delta P^\dagger{} \Delta\right.\\
& \qquad \qquad   \left. + P\Delta^2 P^\dagger{}\right]\\
&=2\left(\tr \left[\Delta^2\right]-\tr \left[P \Delta P^\dagger{} \Delta \right]\right)\\
&=0.
\end{align*}
The second equality follows from the Hermiticity of $\Delta$. Therefore, by the
properties of norms, $\Delta=P\Delta P^\dagger$ for all $P\in \mathcal{A}'$.

Note that if $\Delta$ is invariant under permutations in $\mcA'$, then it is
invariant under the permutation group generated by it, which in turn is the
same group generated by $\mcA$, \ie{}, $\text{G}_\mcA$.
\end{proof}

 \bibliographystyle{unsrt}

\end{document}